\begin{document}
\title*{Multi-Factor Polynomial Diffusion Models and Inter-Temporal Futures Dynamics}
\titlerunning{Multi-Factor Polynomial Diffusion Models}
% Use \titlerunning{Short Title} for an abbreviated version of
% your contribution title if the original one is too long
\author{Peilun He, Nino Kordzakhia, Gareth W. Peters, Pavel V. Shevchenko}
% Use \authorrunning{Short Title} for an abbreviated version of
% your contribution title if the original one is too long
\institute{Peilun He \at Macquarie University, Macquarie Park, NSW 2109, Australia, \email{peilun.he@mq.edu.au}
\and Nino Kordzakhia \at Macquarie University, Macquarie Park, NSW 2109, Australia, \email{nino.kordzakhia@mq.edu.au}
\and Gareth W. Peters \at University of California Santa Barbara, Santa Barbara, CA 93106, United States, \email{garethpeters@ucsb.edu}
\and Pavel V. Shevchenko \at Macquarie University, Macquarie Park, NSW 2109, Australia, \email{pavel.shevchenko@mq.edu.au}}
%
% Use the package "url.sty" to avoid
% problems with special characters
% used in your e-mail or web address
%
\maketitle

\abstract{In stochastic multi-factor commodity models, it is often the case that futures prices are explained by two latent state variables which represent the short and long term stochastic factors. In this work, we develop the family of stochastic models using polynomial diffusion to obtain the unobservable spot price to be used for modelling futures curve dynamics. The polynomial family of diffusion models allows one to incorporate a variety of non-linear, higher-order effects, into a multi-factor stochastic model, which is a generalisation of Schwartz and Smith \cite{schwartz2000short-term} two-factor model. Two filtering methods are used for the parameter and the latent factor estimation to address the non-linearity. We provide a comparative analysis of the performance of the estimation procedures. We discuss the parameter identification problem present in the polynomial diffusion case, regardless, the futures prices can still be estimated accurately. Moreover, we study the effects of different methods of calculating matrix exponential in the polynomial diffusion model. As the polynomial order increases, accurately and efficiently approximating the high-dimensional matrix exponential becomes essential in the polynomial diffusion model. }

\section{Introduction}
\label{sec:introduction}

Stochastic models employed in the analysis of commodity futures play a important role in various financial areas, including price forecasting, risk management, portfolio optimisation. In contrast to other commodities that are traded in both spot and futures markets, crude oil is primarily traded in the futures market. As a consequence, it becomes impractical to directly estimate the price of crude oil futures based on the spot price. Instead, people usually model the underlying spot price, denoted as $S_t$, as a function of certain factors. Under the assumption of an arbitrage-free market, the futures price at current time $t$, denoted as $F_{t,T}$, is equivalent to the expected spot price at maturity time $T$: 
\begin{equation}
    F_{t,T} = \mathbb{E}^*(S_T | \mathcal{F}_t), 
\end{equation}
where $\mathcal{F}_t$ be a natural $\sigma$-algebra generated up to time $t$ and $\mathbb{E}^*(\cdot)$ is the expectation taken with respect to the risk-neutral processes. Under this framework, the derivation of a closed-form expression of $F_{t,T}$ necessitates an accurate distribution of $S_t$. Consequently, this requirement imposes additional constraints on the factors involved in the modelling process. 

Over the past few decades, stochastic processes have been employed to model the factors. In 1990, the Ornstein-Uhlenbeck (OU) process was introduced for the modelling of oil futures in a two-factor setup to represent spot price and convenience yield \cite{gibson1990stochastic}. Building upon this work, Schwartz and Smith \cite{schwartz2000short-term} modelled the logarithm of the underlying spot price of crude oil futures as the sum of two hidden factors. These factors, assumed to follow the OU process, capture short-term fluctuation and long-term equilibrium price level, respectively. Subsequently, this latent factor model and its extensions became widely utilized in stochastic modelling. 

Researchers have further extended this model to enhance its applicability. In the electricity market, a multi-factor model including a deterministic seasonality with additional stochastic factors were modelled by Levy processes in \cite{eydeland1999fundamentals}. A time-changed Levy process were commonly used in option pricing to describe the jump behaviour and dynamics of volatility in \cite{carr2004time, fallahgoul2020risk, huang2004specification}. Sorensen \cite{sorensen2002modeling} extended the model by introducing three hidden factors, including an additional deterministic seasonal component, to capture the dynamics of agricultural commodity prices. In \cite{kiesel2009two}, the authors focused on direct modelling of the electricity futures prices instead of modelling of electricity spot price first. Ames et al. \cite{ames2020risk} incorporated time-varying drift and speed of mean reversion parameters in their modelling of crude oil futures. Favetto and Samson \cite{favetto2010parameter} applied this model in the field of biology, and used both maximum likelihood and expectation maximization methods for parameter estimations. Further, the performance of the multi-factor model in deriving spot prices was improved by incorporating the analyst's forecasts for futures prices, as demonstrated in \cite{cortazar2019commodity}. Peters et al. \cite{peters2013calibration} developed a partial Markov Chain Monte Carlo method to deal with the non-linear non-Gaussian multi-factor model. Comparing different models is also of significance. Schwartz \cite{schwartz1997the} compared models with up to three factors, including a hidden factor, convenience yield and interest rate, for copper, oil and gold. Cortazar and Naranjo \cite{cortazar2006an} compared the performance of one- to four-factor models in crude oil futures, and found that the three-factor and four-factor models excelled in explaining the term structure of futures prices, while the four-factor model outperformed others in fitting the volatility term structure.   

While this framework has gained popularity in the past two decades, it does possess certain limitations. Firstly, as previously mentioned, obtaining a closed-form expression of futures price necessitates an accurate distribution of spot price $S_t$. To address this issue, it is common practice to assume that all factors involved follow a Gaussian distribution, and the logarithm of spot price is a linear function of these factors. This assumption ensures a log-normal distribution of the spot price. Secondly, under this framework, it is common to model the logarithm of spot price. In most instances, this poses no significant problems. However, a noticeable event occurred on 20th April 2020 when the front-month May 2020 WTI crude oil futures settled at a unprecedented value of -\$37.63 per barrel on the New York Mercantile Exchange. This exceedingly rare phenomenon significantly changes the validity of the entire framework. 

This paper aims to address the aforementioned limitations by introducing a polynomial diffusion framework that allows for a more complicated structure of the spot price. The mathematical foundations were introduced in \cite{filipovic2016polynomial}. Under the polynomial diffusion framework, the spot price is represented as a polynomial of any order in terms of the factors. In particular, under certain conditions, it can be proven that the conditional expectation of spot price, which is equivalent to the futures price under the assumption of an arbitrage-free market, is also a polynomial in terms of factors. An application of this framework can be found in the modelling of electricity forwards, where the spot price is represented by a quadratic form of two factors \cite{kleisingeryu2019a}. Additionally, in this study, the Quadratic Kalman Filter was employed to estimate model parameters and unknown state variables. The state space was augmented to include both linear and quadratic terms of the factors. However, as the order of the polynomial increases, the dimension of the state space grows exponentially, making it challenging to derive the explicit form of the augmented state equations. To address this issue, we propose the use of the Extended Kalman Filter (EKF) and Unscented Kalman Filter (UKF). These methods provide effective tools for estimating parameters and the state variables within the context of the polynomial diffusion framework. The EKF and UKF help overcome the computational challenges associated with higher-order polynomial models, enabling the practical implementation of this framework. 

This paper is structured as follows. In Sect.~\ref{sec:ss_model} and Sect.~\ref{sec:pd_model}, we present the two aforementioned frameworks for the pricing of commodity futures. The first framework extends the Schwartz-Smith two-factor model \cite{schwartz2000short-term}, while the second framework models the spot price using polynomial forms. In the second framework, polynomial diffusion is employed to price the futures contracts. Sect.~\ref{sec:filter} introduces the Extended Kalman Filter (EKF) and Unscented Kalman Filter (UKF) as estimation methods for the hidden factors and unknown parameters in the polynomial diffusion models. These filters are specifically designed to handle the non-linear dynamics present in the models. Sect.~\ref{sec:results} presents a numerical analysis of the applications of the polynomial diffusion model. Firstly, we compare seven different methods for calculating the matrix exponential, which is required in the polynomial diffusion model. Our results indicate that the eigen-decomposition method provides an efficient and accurate approximation of the matrix exponential. Next, we evaluate the performance of the polynomial diffusion model through a simulation study. While the futures contracts can be estimated accurately, parameter estimation remains challenging, even when separating parameters in the state equation and the coordinate representations in the measurement equation. As a consequence, selecting the order of the polynomial diffusion model proves to be a challenging task, and further study on constraints is required. Finally, Sect.~\ref{sec:conclusion} concludes the paper. 

\section{Schwartz-Smith Two-Factor Model}
\label{sec:ss_model}
	
In this section, we describe a classical approach to modelling commodity futures, which is an extension of \cite{schwartz2000short-term}. 
	
This approach models the logarithm of spot price $S_t$ as the sum of two unobservable factors $\chi_t$ and $\xi_t$, 
\begin{equation}
\log{(S_t)} = \chi_t + \xi_t,
\label{eq:SS_st}
\end{equation}
where $\chi_t$ represents the short-term fluctuation and $\xi_t$ is the long-term equilibrium price level. We assume both $\chi_t$ and $\xi_t$ follow an OU process, 
\begin{equation}
d\chi_t = -\kappa \chi_t dt + \sigma_{\chi} d W_t^{\chi},
\label{eq:SS_chi}
\end{equation}
and 
\begin{equation}
d\xi_t = (\mu_{\xi} - \gamma \xi_t)dt + \sigma_{\xi} dW_t^{\xi},
\label{eq:SS_xi}
\end{equation}
while in \cite{schwartz2000short-term} only one factor follows the OU process. We assume the changes in the short-term factor $\chi_t$ are temporary and converging to 0 as $t \to \infty$. The processes $(W_t^{\chi})_{t \ge 0}$ and $(W_t^{\xi})_{t \ge 0}$ are correlated standard Brownian Motions with correlation coefficient $\rho$. Here, $\kappa, \gamma \in \mathbb{R}^+$ are the speed of mean-reversion parameters; $\mu_{\xi} \in \mathbb{R}$ is the mean level of the long-term factor; $\sigma_{\chi}, \sigma_{\xi} \in \mathbb{R}^+$ are the volatility parameters; and $\lambda_{\chi}, \lambda_{\xi} \in \mathbb{R}$ are risk premiums. 
	
By assuming a constant risk premium $\lambda_{\chi}$ and $\lambda_{\xi}$, the risk-neutral processes of $\chi_t$ and $\xi_t$ are given by
\begin{equation}
d\chi_t = (-\kappa \chi_t - \lambda_{\chi}) dt + \sigma_{\chi} d W_t^{\chi*}, 
\label{eq:SS_rn_chi}
\end{equation}
and 
\begin{equation}
d\xi_t = (\mu_{\xi} - \gamma \xi_t - \lambda_{\xi})dt + \sigma_{\xi} dW_t^{\xi*}, 
\label{eq:SS_rn_xi}
\end{equation}
where $W_t^{\chi*}$ and $W_t^{\xi*}$ are correlated standard Brownian Motions with correlation coefficient $\rho$. This approach stems from the risk-neutral futures pricing theory developed in \cite{black1976the}. 

Let $\mathcal{F}_t$ be a natural $\sigma$-algebra generated up to time $t$. In discrete time, given the initial values $\chi_{t_0}$ and $\xi_{t_0}$, $\chi_t$ and $\xi_t$ are jointly normally distributed with mean 
$$\mathbb{E}^*\left( \left. \left[ \begin{matrix}
    \chi_t \\
    \xi_t
\end{matrix} \right] \right| \mathcal{F}_{t_0} \right) = 
\left[ \begin{matrix}
    e^{-\kappa (t-t_0)}\chi_{t_0} - \frac{\lambda_{\chi}}{\kappa} \left(1 - e^{-\kappa (t-t_0)}\right) \\
    e^{-\gamma (t-t_0)}\xi_{t_0} + \frac{\mu_{\xi} - \lambda_{\xi}}{\gamma} \left(1 - e^{-\gamma (t-t_0)}\right) 
\end{matrix} \right]$$
and covariance matrix
$$Cov^*\left( \left. \left[ \begin{matrix}
    \chi_t \\
    \xi_t
\end{matrix} \right] \right| \mathcal{F}_{t_0} \right) =  \left[\begin{matrix}
\frac{1 - e^{-2\kappa (t-t_0)}}{2\kappa} \sigma_{\chi}^2 & \frac{1 - e^{-(\kappa + \gamma) (t-t_0)}}{\kappa + \gamma} \sigma_{\chi} \sigma_{\xi} \rho \\
\frac{1 - e^{-(\kappa + \gamma) (t-t_0)}}{\kappa + \gamma}\sigma_{\chi}\sigma_{\xi}\rho & \frac{1 - e^{-2\gamma (t-t_0)}}{2\gamma}\sigma_{\xi}^2
\end{matrix}\right], $$
where $\mathbb{E}^*(\cdot)$ and $Cov^*(\cdot)$ represent the expectation and covariance taken with respect to the risk-neutral processes. Therefore, the spot price, which is defined in \eqref{eq:SS_st}, is log-normally distributed with 
\begin{align}
\log[\mathbb{E}^*(S_t | \mathcal{F}_{t_0})] &= \mathbb{E}^*[\log(S_t) | \mathcal{F}_{t_0}] + \frac{1}{2}Var^*[\log(S_t) | \mathcal{F}_{t_0}] \nonumber\\
&= e^{-\kappa (t-t_0)}\chi_{t_0} + e^{-\gamma (t-t_0)}\xi_{t_0} + A(t-t_0), 
\label{eq:SS_logESt}
\end{align}	
where
\begin{align}
A(t) =& -\frac{\lambda_{\chi}}{\kappa}(1 - e^{-\kappa t}) + \frac{\mu_{\xi} - \lambda_{\xi}}{\gamma}(1 - e^{-\gamma t}) \nonumber \\
&+ \frac{1}{2}\left(\frac{1 - e^{-2\kappa t}}{2\kappa}\sigma_{\chi}^2 + \frac{1 - e^{-2\gamma t}}{2\gamma}\sigma_{\xi}^2 + 2\frac{1 - e^{-(\kappa + \gamma)t}}{\kappa + \gamma}\sigma_{\chi}\sigma_{\xi}\rho \right).
\end{align}

Next, we derive the equations for the futures prices. Let $F_{t, T}$ be the market price of a futures contract at time $t$ with maturity time $T$. For eliminating arbitrage, given all information until time $t$, the futures price must be equal to the expected spot price at the maturity time $T$. Therefore, under the risk-neutral measure, we have (assuming the interest rate is not stochastic)
$$\log{(F_{t, T})} = \log{[\mathbb{E}^*(S_T | \mathcal{F}_t)]} = e^{-\kappa (T-t)} \chi_t + e^{-\gamma (T-t)} \xi_t + A(T-t). $$
After discretization, we have the following AR(1) dynamics for bivariate state variable $x_t$
\begin{equation}
x_t = c + Ex_{t-1} + w_t, 
\label{eq:xt}
\end{equation}
where 
$$x_t = \left[ \begin{matrix} \chi_t \\ \xi_t \end{matrix} \right],\; c = \left[ \begin{matrix} 0 \\ \frac{\mu_{\xi}}{\gamma} \left(1 - e^{-\gamma \Delta t} \right) \end{matrix} \right],\; E = \left[ \begin{matrix} e^{-\kappa \Delta t} & 0 \\ 0 & e^{-\gamma \Delta t}\end{matrix} \right], $$
and $w_t$ is a column vector of correlated normally distributed noises with $\mathbb{E}(w_t) = \textbf{0}$ and
$$Cov(w_t) = \Sigma_w = \left[\begin{matrix}
\frac{1 - e^{-2\kappa \Delta t}}{2\kappa}\sigma_{\chi}^2 & \frac{1 - e^{-(\kappa + \gamma) \Delta t}}{\kappa + \gamma}\sigma_{\chi}\sigma_{\xi}\rho \\
\frac{1 - e^{-(\kappa + \gamma) \Delta t}}{\kappa + \gamma}\sigma_{\chi}\sigma_{\xi}\rho & \frac{1 - e^{-2\gamma \Delta t}}{2\gamma}\sigma_{\xi}^2
\end{matrix}\right],$$
$\Delta t$ is the time step between $(t-1)$ and $t$. Moreover, we have the measurement equation 
\begin{equation}
y_t = d_t + F_t^\top x_t + v_t, 
\label{eq:yt}
\end{equation}
where 
$$y_t = \left( \log{(F_{t,T_1})}, \dots, \log{(F_{t,T_m})} \right)^\top, d_t = \left( A(T_1 - t), \dots, A(T_m - t) \right)^\top, $$
$$F_t = \left[ \begin{matrix} e^{-\kappa (T_1 - t)}, \dots, e^{-\kappa (T_m - t)} \\ e^{-\gamma (T_1 - t)}, \dots, e^{-\gamma (T_m - t)} \end{matrix} \right], $$
and $m$ is the number of futures contracts. $v_t$ is an $m$-dimensional vector of normally distributed noises with $\mathbb{E}(v_t) = \textbf{0}$ and
$$Cov(v_t) = \Sigma_v = \left[ \begin{matrix} \sigma_1^2 & 0 & \dots & 0 \\ 0 & \sigma_2^2 & \dots & 0 \\ \vdots & \vdots & \ddots & \vdots \\ 0 & 0 & \dots & \sigma_m^2 \end{matrix} \right]. $$

The prediction error $e_t = y_t - \mathbb{E}(y_t|\mathcal{F}_{t-1})$ are supposed to be multivariate normally distributed. Therefore, the log-likelihood function of $y = (y_1, \dots, y_n)$ can be written as 
\begin{equation}
l(\theta;y) = -\frac{nm \log{(2\pi)}}{2} - \frac{1}{2} \sum_{t=1}^{n}{\left[ \log{[\det{(L_{t})}]} + e_t^\top L_{t}^{-1} e_t \right]}, 
\label{eq:ll}
\end{equation}
where the set of unknown parameters $\theta = (\kappa, \gamma, \mu_{\xi}, \sigma_{\chi}, \sigma_{\xi}, \rho, \lambda_{\chi}, \lambda_{\xi}, \sigma_1, \dots, \sigma_m)$; $n$ is the number of observations; $m$ is the number of contracts; $L_{t} = Cov(e_t)$. Given all observations $y$, the maximum likelihood estimate (MLE) of $\theta$ is obtained by maximising the log-likelihood function \eqref{eq:ll}. 

\section{Polynomial Diffusion Model}
\label{sec:pd_model}
	
In this section, we provide some important theorems of polynomial diffusions and apply these theorems to the two-factor model. The mathematical foundation and applications of polynomial diffusions in finance are provided in \cite{filipovic2016polynomial}. 
	
\begin{definition}
Consider the stochastic differential equation
\begin{equation}
dX_t = b(X_t)dt + \sigma(X_t)dW_t, 
\label{eq:sde}
\end{equation}
where $W_t$ is a $d$-dimensional standard Brownian motion and map $\sigma: \mathbb{R}^d \to \mathbb{R}^{d \times d}$ is continuous. Define $a := \sigma \sigma^\top$. For maps $a: \mathbb{R}^d \to \mathbb{S}^{d}$ and $b: \mathbb{R}^d \to \mathbb{R}^d$, suppose we have 
$a_{ij} \in Pol_2$ and $b_i \in Pol_1$. $\mathbb{S}^d$ is the set of all real symmetric $d \times d$ matrices and $Pol_n$ is the set of all polynomials of degree at most $n$. Then the solution of \eqref{eq:sde} is a polynomial diffusion. 
\end{definition}

Moreover, we define the generator $\mathcal{G}$ associated with the polynomial diffusion $X_t$ as
\begin{equation}
\mathcal{G}f(x) = \frac{1}{2} Tr\left( a(x) \nabla^2 f(x)\right) + b(x)^\top \nabla f(x)
\label{eq:generator}
\end{equation}
for $x \in \mathbb{R}^d$ and any $C^2$ function $f$. Let $N = C(d+n, n)$ be the dimension of $Pol_n$, and $H: \mathbb{R}^d \to \mathbb{R}^N$ be a function whose components form a basis of $Pol_n$. Then for any $p \in Pol_n$, there exists a unique vector $\vec{p} \in \mathbb{R}^N$ such that 
\begin{equation}
p(x) = H(x)^\top \vec{p}
\label{eq:vec_p}
\end{equation}
and $\vec{p}$ is the coordinate representation of $p(x)$. Moreover, there exists a unique matrix representation $G \in \mathbb{R}^{N \times N}$ of the generator $\mathcal{G}$, such that $G \vec{p}$ is the coordinate vector of $\mathcal{G} p$. So we have
\begin{equation}
\mathcal{G} p(x) = H(x)^\top G \vec{p}. 
\label{eq:G}
\end{equation}

\begin{theorem}
	Let $p(x) \in Pol_n$ be a polynomial with coordinate representation $\vec{p} \in \mathbb{R}^N$ satisfying \eqref{eq:vec_p}, $G \in \mathbb{R}^{N \times N}$ be a matrix representation of generator $\mathcal{G}$ satisfying \eqref{eq:G}, and $X_t \in \mathbb{R}^d$ satisfy \eqref{eq:sde}. Then for $0 \le t \le T$, we have
	$$\mathbb{E} \left[p(X_T)|\mathcal{F}_t \right] = H(X_t)^\top e^{(T-t)G} \vec{p}, $$
	where $\mathcal{F}_t$ represents all information available until time $t$. 
	\label{th:pd}
\end{theorem}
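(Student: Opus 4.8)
The plan is to follow the classical martingale argument behind the polynomial--diffusion moment formula. The structural fact that makes everything work is that the assumptions $b_i \in Pol_1$ and $a_{ij}\in Pol_2$ force the generator $\mathcal{G}$ of \eqref{eq:generator} to map $Pol_n$ into itself: in each second-order term a degree-$2$ coefficient multiplies a degree-$(n-2)$ second partial derivative, and in each first-order term a degree-$1$ coefficient multiplies a degree-$(n-1)$ first partial derivative, so $\mathcal{G}p\in Pol_n$ whenever $p\in Pol_n$. This is precisely what allows the matrix representation $G\in\mathbb{R}^{N\times N}$ in \eqref{eq:G} to be well-defined on all of $\mathbb{R}^N$, and it is the structural fact on which the whole argument rests.

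Fix $T$, take $X$ to be the given solution of \eqref{eq:sde}, and define the $\mathbb{R}$-valued process $Y_s := H(X_s)^\top e^{(T-s)G}\vec{p}$ for $s\in[t,T]$. For each fixed $s$ the map $q_s(x) := H(x)^\top e^{(T-s)G}\vec{p}$ is itself an element of $Pol_n$ with coordinate vector $\vec{q}_s = e^{(T-s)G}\vec{p}$, so $Y_s = q_s(X_s)$ and I can apply the time-dependent It\^o formula to the $C^{1,2}$ map $(s,x)\mapsto q_s(x)$. Using $d\langle X^i,X^j\rangle_s = a_{ij}(X_s)\,ds$, the drift part of $dY_s$ equals $\bigl(\partial_s q_s(X_s) + \mathcal{G}q_s(X_s)\bigr)\,ds$. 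Now $\partial_s q_s(x) = -H(x)^\top G e^{(T-s)G}\vec{p}$ because $\partial_s e^{(T-s)G} = -Ge^{(T-s)G}$, while \eqref{eq:G} applied to $q_s$ gives $\mathcal{G}q_s(x) = H(x)^\top G\vec{q}_s = H(x)^\top G e^{(T-s)G}\vec{p}$; the two cancel, so $Y_s = Y_t + \int_t^s \nabla q_u(X_u)^\top\sigma(X_u)\,dW_u$ is a local martingale.

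It remains to upgrade this to a true martingale, which is where the genuine analytic work lies. The coefficients $b$ and $\sigma$ grow at most linearly and at most like a degree-one polynomial in $\|x\|$ respectively (again from $b_i\in Pol_1$, $a_{ij}\in Pol_2$), so the usual It\^o--Gr\"onwall estimate together with Burkholder--Davis--Gundy yields $\mathbb{E}\bigl[\sup_{t\le s\le T}\|X_s\|^{2n}\bigr] < \infty$. Since the integrand $\nabla q_u(X_u)^\top\sigma(X_u)$ has entries of polynomial growth of degree at most $n$ in $X_u$, this bound gives $\mathbb{E}\int_t^T\|\nabla q_u(X_u)^\top\sigma(X_u)\|^2\,du < \infty$, so the stochastic integral above is a genuine $L^2$-martingale with zero mean. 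Hence $\mathbb{E}[Y_T\mid\mathcal{F}_t] = Y_t$. Finally, $Y_T = H(X_T)^\top e^{0}\vec{p} = H(X_T)^\top\vec{p} = p(X_T)$ by \eqref{eq:vec_p}, and $Y_t = H(X_t)^\top e^{(T-t)G}\vec{p}$, which is exactly the asserted identity.

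I expect the only real obstacle to be the martingale-versus-local-martingale step: one needs the finite-moment bound on $\sup_{s\le T}\|X_s\|$, and, strictly speaking, enough regularity of \eqref{eq:sde} for such a bound (and for a solution) to exist in the first place; Filipovi\'c and Larsson \cite{filipovic2016polynomial} treat this carefully. Everything else --- the closure of $Pol_n$ under $\mathcal{G}$, the identification $Y_s = q_s(X_s)$, and the exact cancellation of the drift via $\partial_s e^{(T-s)G} = -Ge^{(T-s)G}$ --- is linear-algebraic bookkeeping with coordinate representations.
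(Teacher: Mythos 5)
Your proof is correct, and it reconstructs essentially the argument the paper relies on: the paper itself offers no proof beyond citing Filipovi\'c and Larsson \cite{filipovic2016polynomial}, and your route --- invariance of $Pol_n$ under $\mathcal{G}$, It\^o applied to $q_s(x)=H(x)^\top e^{(T-s)G}\vec{p}$ with the drift cancelling via $\partial_s e^{(T-s)G}=-Ge^{(T-s)G}$, then moment bounds to pass from local to true martingale --- is exactly the standard proof given in that reference. Nothing further is needed.
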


\begin{proof}
The proof is given in \cite{filipovic2016polynomial}. 
\qed
\end{proof}

Next, we apply this theorem to the two-factor model. Assume the spot price $S_t$ is modelled as
\begin{equation}
S_t = p_n(x_t), 
\label{eq:poly_st}
\end{equation}
where $x_t = (\chi_t, \xi_t)^\top$ is a vector of state variables and $p_n(\cdot)$ is a polynomial function with a degree at most $n$. $\chi_t$ and $\xi_t$ are the short-term and long-term factors defined in \eqref{eq:SS_chi} and \eqref{eq:SS_xi} for real processes and \eqref{eq:SS_rn_chi} and \eqref{eq:SS_rn_xi} for risk-neutral processes. Obviously, $x_t$ satisfies the stochastic differential equation \eqref{eq:sde}, with 
$$b(x_t) = \left[ \begin{matrix} -\kappa \chi_t - \lambda_{\chi} \\ \mu_{\xi} - \gamma \xi_t - \lambda_{\xi} \end{matrix} \right], \sigma(x_t) = \left[ \begin{matrix} \sigma_{\chi} & 0 \\ 0 & \sigma_{\xi} \end{matrix} \right], a(x_t) = \sigma(x_t) \sigma(x_t)^\top = \left[ \begin{matrix} \sigma_{\chi}^2 & 0 \\ 0 & \sigma_{\xi}^2 \end{matrix} \right]. $$
For any basis $H_n(x_t)$, the polynomial $p_n(x_t)$ can be uniquely represented as
$$p_n(x_t) = H_n(x_t)^\top \vec{p}. $$
The generator $\mathcal{G}$ is given by 
$$\mathcal{G}f(x) = \frac{1}{2} Tr \left( \left[ \begin{matrix} \sigma_{\chi}^2 & 0 \\ 0 & \sigma_{\xi}^2 \end{matrix} \right] \nabla^2 f(x) \right) + \left[ \begin{matrix} -\kappa \chi_t - \lambda_{\chi} \\ \mu_{\xi} - \gamma \xi_t - \lambda_{\xi} \end{matrix} \right]^\top \nabla f(x). $$
By applying $\mathcal{G}$ to each element of $H_n(x_t)$, we get the matrix representation $G$. Then, by Theorem \ref{th:pd}, the futures price $F_{t,T}$ is given by
\begin{equation}
F_{t,T} = \mathbb{E}^*(S_T | \mathcal{F}_t) = H(x_t)^\top e^{(T-t)G} \vec{p}. 
\label{eq:qua_ftt}
\end{equation}
Therefore, we have the non-linear state-space model 
\begin{equation}
x_t = c + E x_{t-1} + w_t, w_t \sim N(\textbf{0}, \Sigma_w), 
\label{eq:qua_xt}
\end{equation}
and 
\begin{equation}
y_t = H_n(x_t)^\top e^{(T-t)G} \vec{p} + v_t, v_t \sim N(\textbf{0}, \Sigma_v). 
\label{eq:qua_yt}
\end{equation}

In this paper, we consider a polynomial with degree 2, 
$$S_t = \alpha_1 + \alpha_2 \chi_t + \alpha_3 \xi_t + \alpha_4 \chi_t^2 + \alpha_5 \chi_t \xi_t + \alpha_6 \xi_t^2. $$
The basis of $Pol_2$ is 
$$H(x_t) = (1, \chi_t, \xi_t, \chi_t^2, \chi_t \xi_t, \xi_t^2)^\top, $$
which has a dimension $N=6$. The polynomial $S_t$ can be uniquely represented as
$$S_t = H(x_t)^\top \vec{p}, $$
where the coordinate representation $\vec{p}$ is given by 
$$\vec{p} = \left(\alpha_1, \alpha_2, \alpha_3, \alpha_4, \alpha_5, \alpha_6 \right)^\top. $$
Then, applying $\mathcal{G}$ to each element of $H(x_t)$, we get 
$$G = \left[ \begin{matrix} 
0 & -\lambda_{\chi} & \mu_{\xi} - \lambda_{\xi} & \sigma_{\chi}^2 & 0 & \sigma_{\xi}^2 \\ 
0 & -\kappa & 0 & -2 \lambda_{\chi} & \mu_{\xi} - \lambda_{\xi} & 0 \\ 
0 & 0 & -\gamma & 0 & -\lambda_{\chi} & 2\mu_{\xi} - 2\lambda_{\xi} \\ 
0 & 0 & 0 & -2\kappa & 0 & 0 \\ 
0 & 0 & 0 & 0 & -\kappa - \gamma & 0 \\ 
0 & 0 & 0 & 0 & 0 & -2\gamma
\end{matrix} \right]. $$

\section{Non-Linear Filtering Algorithm}
\label{sec:filter}
Kalman Filter (KF) is the most popular filtering method for estimating the state vector $x_t = (\chi_t, \xi_t)^\top$ based on a filtration $\mathcal{F}_t$. Moreover, the unknown parameters can be estimated jointly by maximising the log-likelihood function which is calculated using the KF. However, the KF can only deal with linear Gaussian state-space models. For dealing with the non-linearity of the polynomial diffusion model, we introduce two extensions of KF, the Extended Kalman Filter (EKF) and Unscented Kalman Filter (UKF). 

In this paper, we use the notation 
\begin{align}
a_{t|t-1} &:= \mathbb{E}(x_t | \mathcal{F}_{t-1}),& P_{t|t-1} &:= Cov(x_t | \mathcal{F}_{t-1}), \nonumber \\
a_t &:= \mathbb{E}(x_t | \mathcal{F}_t),& P_t &:= Cov(x_t | \mathcal{F}_t) \nonumber 
\end{align} 
to represent the expected values and covariance matrices. 

The first non-linear filtering method is EKF. In order to capture the non-linear dynamics in the polynomial diffusion model, EKF linearises the state and measurement equations. Consider the following non-linear dynamic system:
\begin{equation}
    x_t = f(x_{t-1}) + w_t, w_t \sim N(\textbf{0}, \Sigma_w), 
\end{equation}
\begin{equation}
    y_t = h(x_t) + v_t, v_t \sim N(\textbf{0}, \Sigma_v). 
\end{equation}
where $f(\cdot)$ and $h(\cdot)$ are non-linear functions; $x_t$ is the unobservable state vector; $y_t$ is the observation. The main idea of EKF is to linearise the functions $f(\cdot)$ and $h(\cdot)$ by the first-order Taylor series. Let $J_f$ and $J_h$ be the Jacobian of $f(\cdot)$ and $h(\cdot)$ respectively. 

Given the prior mean and a new observation at current time $t$, we calculate the posterior mean $a_{t|t-1}$ and prediction error $e_t$ as 
\begin{equation}
    a_{t|t-1} = f(a_{t-1}),
\end{equation}
\begin{equation}
    e_t = y_t - h(a_{t|t-1}). 
\end{equation}
However, because of the non-linearity in the state and measurement equations, we cannot get a closed-form of the posterior covariance $P_{t|t-1}$ and the covariance of prediction error $L_t$. In that case, we approximate $f(\cdot)$ and $h(\cdot)$ by first-order Taylor expansion at point $a_{t-1}$ and $a_{t|t-1}$ respectively: 
\begin{equation}
    f(x_{t-1}) = f(a_{t-1}) + J_f(a_{t-1}) (x_{t-1} - a_{t-1}), 
\end{equation}
\begin{equation}
    h(x_t) = h(a_{t|t-1}) + J_h(a_{t|t-1}) (x_t - a_{t|t-1}).
\end{equation}
Therefore,  $P_{t|t-1}$ and $L_t$ are calculated as: 
\begin{equation}
    P_{t|t-1} = J_f(a_{t-1}) P_{t-1} J_f^\top(a_{t-1}) + \Sigma_w, 
\end{equation}
\begin{equation}
    L_t = J_h(a_{t|t-1}) P_{t|t-1} J_h^\top(a_{t|t-1}) + \Sigma_v. 
\end{equation}
Finally, we update the prior mean and covariance at current time $t$ as: 
\begin{equation}
    a_t = a_{t|t-1} + K_t e_t, 
\end{equation}
\begin{equation}
    P_t = \left(I - K_t J_h(a_{t|t-1})\right) P_{t|t-1},
\end{equation}
where $K_t = P_{t|t-1} J_h^\top(a_{t|t-1}) L_{t}^{-1}$ is the Kalman gain matrix. 

%\begin{algorithm}[H]  
%\caption{Algorithm of Extended Kalman Filter} 
%\label{alg:EKF}
%\hspace*{0.02in} {\bf Input:} $a_0$, $P_0$, $y_t$ where $t \in [1,n]$ \\
%\hspace*{0.02in} {\bf Output:} $a_{t|t-1}$, $l_t$
%\begin{algorithmic}[1]  
%	\State $t = 1$
%	\For{$t \le n$} 
%	\State \textbf{Forecasting}: 
%	$$a_{t|t-1} = f(a_{t-1})$$
%	$$P_{t|t-1} = J_f(a_{t-1}) P_{t-1} J_f^\top(a_{t-1}) + \Sigma_w$$
%	\State \textbf{Innovation}:
%	$$e_t = y_t - h(a_{t|t-1})$$
%	$$L_t = J_h(a_{t|t-1}) P_{t|t-1} J_h^\top(a_{t|t-1}) + \Sigma_v$$
%	\State \textbf{Updating}: 
%	$$K_t = P_{t|t-1} J_h^\top(a_{t|t-1}) L_{t}^{-1}$$
%	$$a_t = a_{t|t-1} + K_t e_t$$
%	$$P_t = \left(I - K_t J_h(a_{t|t-1})\right) P_{t|t-1}$$
%	\State \textbf{Log-likelihood function}: 
%	$$l_t = - \log{\left(\det{(L_{t})}\right)} - e_t^\top L_{t}^{-1} e_t$$
%	\State $t= t + 1$
%	\EndFor
%	\State \Return $a_{t|t-1}$, $l_t$ 
%\end{algorithmic}  
%\end{algorithm}  

In the EKF, the state distribution is propagated by linearising the non-linear system using the first-order approximation. However, this linearisation process can introduce significant errors in the true state distribution, especially when the system exhibits a strong non-linearity. Additionally, obtaining an analytical Jacobian for complicated state and measurement equations may be impractical. To address these issues, we introduce a derivative-free filtering method called the Unscented Kalman Filter (UKF). Instead of linearising the system, the UKF employs a set of carefully selected points, known as sigma points, to represent the true distributions of the state variables. These sigma points are then propagated through the state equation. The true prior and posterior means, and covariance would be captured by the sigma points. 

At previous time $t-1$, the sigma points are defined as 
\begin{equation}
    \mathcal{X}_{t-1} = \left[ a_{t-1}, a_{t-1} \pm \sqrt{(n_x + \lambda)P_{t-1}} \right], 
\end{equation}
where $n_x$ is the number of state variables. In this paper, we set the scaling parameter $\lambda = 0$. A detailed description is available in \cite{wan2000unscented}. 

Next, these sigma points are propagated through the non-linear state and measurement equations: 
\begin{equation}
    \mathcal{X}_{t|t-1} = f(\mathcal{X}_{t-1}), 
\end{equation}
\begin{equation}
    \mathcal{Y}_{t|t-1} = h(\mathcal{X}_{t|t-1}). 
\end{equation}
The posterior mean $a_{t|t-1}$, one-step forecast $\hat{y}_{t|t-1}$ and covariances are weighted sample mean / covariance of sigma points. The remaining steps in the UKF are similar to those in the EKF.

\section{Results}
\label{sec:results}

In this section, we present the outcomes of our numerical experiments. In Sect.~\ref{sec:mat_exp}, we compare seven distinct numerical methods for the evaluation of a matrix exponential. Evaluating the matrix exponential of the matrix $G$ is necessary for obtaining a closed-form of futures prices according to Theorem \ref{th:pd}. With the increase in the degree of the polynomial, the dimension of $G$ grows exponentially. In Sect.~\ref{sec:sim_study}, we assess the performance of the polynomial diffusion model through a simulation study. 

\subsection{Matrix Exponential}
\label{sec:mat_exp}

Theorem~\ref{th:pd} gives a direct way of deriving the futures price with the spot price given in a polynomial form. However, the computation of matrix exponential $e^{(T-t)G}$ is required. As the degree of polynomial increases, the dimension of the $G$ matrix increases quickly. As a consequence, one must carefully choose the numerical methods to compute the matrix exponential. A discussion of different methods to compute matrix exponential is given in \cite{moler2003nineteen}. We compare different methods to compute $e^{A}$ in the following three aspects: 

\runinhead{Stability:}
If small changes in matrix $A$ cause large changes in $e^A$. It is evaluated by $$\phi = \frac{ ||e^{A+E} - e^A|| }{||e^A||}, $$ where $||\cdot||$ represents 2-norm and $E$ is a matrix whose norm is small. 

\runinhead{Accuracy:}
The difference between an approximation and the true value. As we know the vector of eigenvalues $\Lambda$ and the matrix of corresponding eigenvectors $V$ (as shown in the steps of generating matrix), the true value is calculated as $e^A = V e^\Lambda V^{-1}$. Let $B = \{b_{ij} \}$ be the matrix exponential calculated by one method and $C = \{ c_{ij} \}$ be the true value of the matrix exponential, then the accuracy is evaluated as $$\psi = \sum_{i=1}^{n} \sum_{j=1}^{n} (b_{ij} - c_{ij})^2. $$

\runinhead{Efficiency:}
The computing time required for this methods. 

~

Seven different methods are tested on 100 realisations of $10 \times 10$ random matrices. The matrix is generated by the following steps (where the dimension $n=10$): 
\begin{enumerate}
    \item{Generate a $n$-dimensional vector of eigenvalues $\Lambda = (\lambda_1, ..., \lambda_n)$, where $\lambda_i \sim N(0, 10^2)$. }
    \item{Generate a diagonal matrix $V$, with diagonal elements $\Lambda$. }
    \item{Generate a $n \times n$ matrix $U=\{ u_{ij} \}$, where $u_{ij} \sim N(0, 1)$. Then, divide each column of $U$ by its $L_2$ norm. Therefore, each column of $U$ forms an eigenvector. }
    \item{Calculate $A = U * V * U^{-1}$. }
\end{enumerate}

\begin{table}[ht]
    \caption{Stability, accuracy and efficiency of the listed methods. }
    \centering
    \label{tab:mat_exp}
    \begin{tabular}{cccc}
        \hline\noalign{\smallskip}
        Methods & Stability & Accuracy & Efficency \\
        \noalign{\smallskip}\svhline\noalign{\smallskip}
        Taylor series & 1.7184 & 10.0214 & 0.1038 \\
        Pade approximation & 1.7182 & 7.1447e+22 & 0.2415 \\
        Scaling and squaring & 1.7183 & 4.3527 & 0.0657 \\
        Lagrange & 1.7183 & 0.5412 & 0.1030 \\
        Newton & 1.7183 & 0.0938 & 0.2098 \\
        Vandermonde & 1.7183 & 4.0583e+10 & 0.1153 \\
        Eigen-decomposition & 1.7183 & 0.0094 & 0.0340 \\
        \noalign{\smallskip}\hline\noalign{\smallskip}
    \end{tabular}
\end{table}

%\begin{table}[ht]
%    \caption{Stability, accuracy and efficiency of the listed methods. }
%    \centering
%    \label{tab:mat_exp}
%    \begin{tabular}{cccc}
%        \hline\noalign{\smallskip}
%        Methods & Stability & Accuracy & Efficency \\
%        \noalign{\smallskip}\svhline\noalign{\smallskip}
%        Taylor series & 1.7183 & 31.8785 & 0.0925 \\
%        Pade approximation & 1.7183 & 1.4657e+21 & 0.2832 \\
%        Scaling and squaring & 1.7183 & 8.6253e+09 & 0.0624 \\
%        Lagrange & 1.7183 & 19.7442 & 0.0424 \\
%        Newton & 1.7183 & 18.2120 & 0.0923 \\
%        Vandermonde & 1.7183 & 3.7672e+06 & 0.0612 \\
%        Eigen-decomposition & 1.7183 & 20.2650 & 0.0260 \\
%        \noalign{\smallskip}\hline\noalign{\smallskip}
%    \end{tabular}
%\end{table}

The mean stability, mean accuracy and efficiency (in second) of seven methods are given in Table~\ref{tab:mat_exp}. All methods have similar stability. Eigen-decomposition method has the best accuracy which is only 0.0094. The Eigen-decomposition and Scaling and Squaring methods have the fastest computing time in ascending order. In conclusion, eigen-decomposition approximates the matrix exponential accurately and efficiently, and it is used in the following section to compute the matrix exponential. 

\subsection{Simulation Study}
\label{sec:sim_study}

In this section, we assess the performance of the polynomial diffusion model in terms of futures estimation and parameter estimation. To evaluate the model, we conducted a simulation study on two datasets. The first data has 1000 observations and 13 contracts, with maturity time from $T_1$ = 1 month for the first available contract to $T_{13}$ = 13 months for the last available contract. The second data also has 1000 observations but contains 20 contracts with maturity times up to 20 months. The time series of two data are given in Fig.~\ref{fig:price1} and Fig.~\ref{fig:price2}, and the term structures are given in Fig.~\ref{fig:price1_3D} and Fig.~\ref{fig:price2_3D}. 

\begin{figure}[ht]
	\centering
        \includegraphics[width=0.8\textwidth]{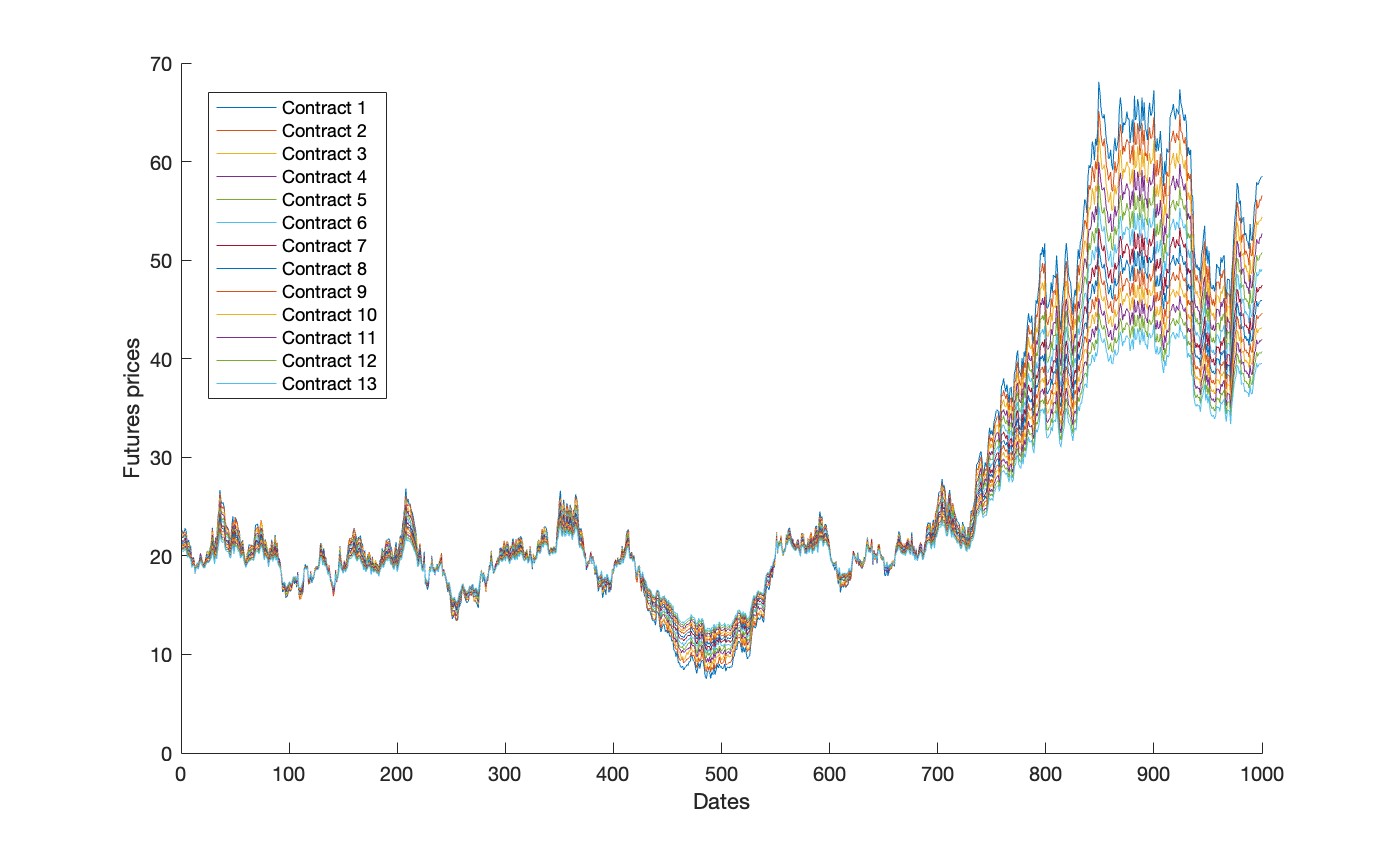}
	\caption{The time series plots of 13 simulated futures contract prices with the maturity times $T_1 = 1$ month, $T_2 = 2$ months, $\dots$, $T_{13}=13$ months respectively. }
        \label{fig:price1}
\end{figure}

\begin{figure}[ht]
	\centering
        \includegraphics[width=0.8\textwidth]{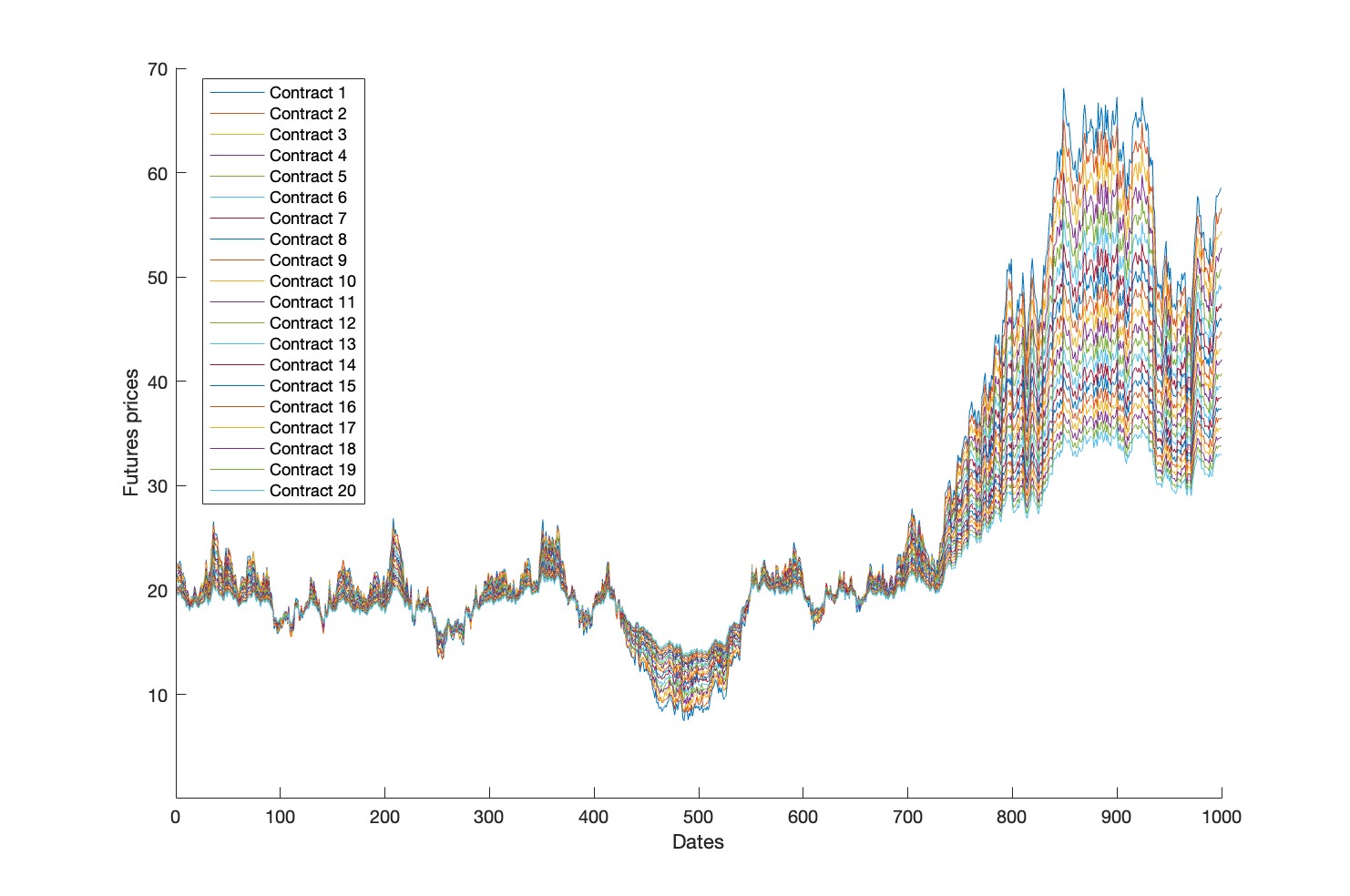}
	\caption{The time series plots of 20 simulated futures contract prices with the maturity times $T_1 = 1$ month, $T_2 = 2$ months, $\dots$, $T_{20}=20$ months respectively. }
        \label{fig:price2}
\end{figure}

\begin{figure}[ht]
	\centering
        \includegraphics[width=0.8\textwidth]{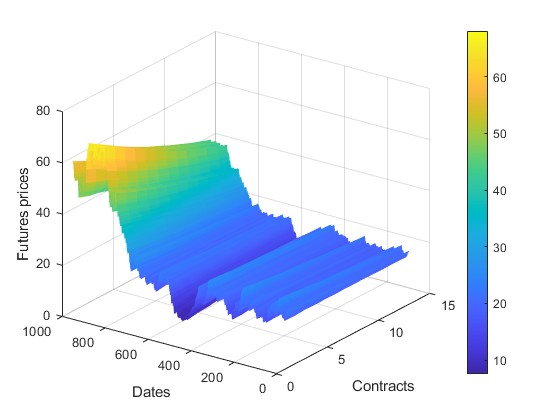}
	\caption{Term structure plot of 13 futures contracts prices versus their maturities. }
        \label{fig:price1_3D}
\end{figure}

\begin{figure}[ht]
	\centering
        \includegraphics[width=0.8\textwidth]{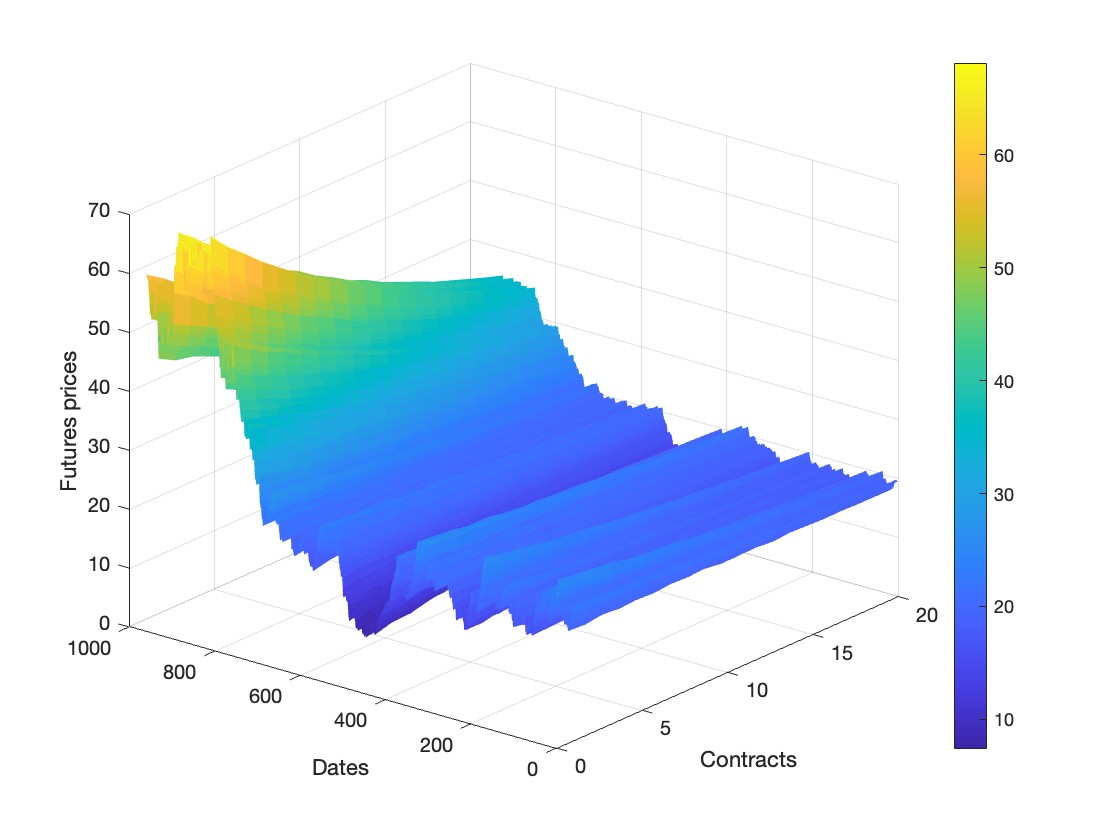}
	\caption{Term structure plot of 20 futures contracts prices versus their maturities. }
        \label{fig:price2_3D}
\end{figure}

Our target is to show that there exist some identification problems between model parameters and coordinate representation of spot price, but this would not affect the contract estimations. To achieve this, we first separate all parameters into two sets: model parameters $\theta = (\kappa, \gamma, \mu_{\xi}, \sigma_{\chi}, \sigma_{\xi}, \rho, \lambda_{\chi}, \lambda_{\xi}, \sigma_1, \sigma_2, \dots, \sigma_{20})$, which includes all state parameters and measurement errors, and coordinate representation $\vec{p} = (\alpha_1, \alpha_2, \alpha_3, \alpha_4, \alpha_5, \alpha_6)$. Then, the polynomial diffusion model performances were evaluated through root mean square error (RMSE) in the following 4 cases: 
\runinhead{Case 1.} All model parameters and coordinate representations are known. Only the futures contracts and state variables are estimated. 
\runinhead{Case 2.} Model parameters are fixed to the true values. 
\runinhead{Case 3.} Coordinate representations are fixed to the true values. 
\runinhead{Case 4.} All model parameters and coordinate representations are estimated. 

\begin{table}
\caption{Root mean square errors (RMSEs) for futures contracts in the case of EKF: comparison of 13-contract data.}
\label{tab:RMSE_EKF1}  
\centering
\begin{tabular}{p{3cm}p{2cm}p{2cm}p{2cm}p{1cm}}
\hline\noalign{\smallskip}
Contracts & Case 1 & Case 2 & Case 3 & Case 4 \\
\noalign{\smallskip}\svhline\noalign{\smallskip}
Contract 1 & 0.1200 & 0.1199 & 0.1182 & 0.1173 \\
Contract 2 & 0.1130 & 0.1129 & 0.1130 & 0.1150 \\
Contract 3 & 0.1079 & 0.1078 & 0.1084 & 0.1075 \\
Contract 4 & 0.0946 & 0.0946 & 0.0938 & 0.0910 \\
Contract 5 & 0.0893 & 0.0892 & 0.0895 & 0.0872 \\
Contract 6 & 0.0780 & 0.0779 & 0.0777 & 0.0769 \\
Contract 7 & 0.0704 & 0.0704 & 0.0706 & 0.0706 \\
Contract 8 & 0.0620 & 0.0620 & 0.0622 & 0.0621 \\
Contract 9 & 0.0506 & 0.0506 & 0.0506 & 0.0506 \\
Contract 10 & 0.0404 & 0.0404 & 0.0404 & 0.0412 \\
Contract 11 & 0.0305 & 0.0305 & 0.0304 & 0.0317 \\
Contract 12 & 0.0205 & 0.0205 & 0.0206 & 0.0239 \\
Contract 13 & 0.0100 & 0.0100 & 0.0098 & 0.0066 \\
Mean & 0.0683 & 0.0682 & 0.0681 & 0.0678 \\
\noalign{\smallskip}\hline\noalign{\smallskip}
\end{tabular}
\end{table}

\begin{table}
\caption{Root mean square errors (RMSEs) for futures contracts in the case of EKF: comparison of 20-contract data.}
\label{tab:RMSE_EKF2}  
\centering
\begin{tabular}{p{3cm}p{2cm}p{2cm}p{2cm}p{1cm}}
\hline\noalign{\smallskip}
Contracts & Case 1 & Case 2 & Case 3 & Case 4 \\
\noalign{\smallskip}\svhline\noalign{\smallskip}
Contract 1 & 0.1873 & 0.1873 & 0.1858 & 0.1939 \\
Contract 2 & 0.1802 & 0.1801 & 0.1803 & 0.1864 \\
Contract 3 & 0.1783 & 0.1783 & 0.1792 & 0.1827 \\
Contract 4 & 0.1603 & 0.1604 & 0.1593 & 0.1585 \\
Contract 5 & 0.1577 & 0.1578 & 0.1580 & 0.1600 \\
Contract 6 & 0.1456 & 0.1456 & 0.1453 & 0.1459 \\
Contract 7 & 0.1394 & 0.1394 & 0.1395 & 0.1410 \\
Contract 8 & 0.1329 & 0.1329 & 0.1334 & 0.1346 \\
Contract 9 & 0.1200 & 0.1199 & 0.1201 & 0.1217 \\
Contract 10 & 0.1091 & 0.1091 & 0.1092 & 0.1099 \\
Contract 11 & 0.0983 & 0.0984 & 0.0983 & 0.1007 \\
Contract 12 & 0.0910 & 0.0910 & 0.0913 & 0.0925 \\
Contract 13 & 0.0817 & 0.0817 & 0.0819 & 0.0832 \\
Contract 14 & 0.0700 & 0.0699 & 0.0699 & 0.0720 \\
Contract 15 & 0.0569 & 0.0569 & 0.0567 & 0.0579 \\
Contract 16 & 0.0489 & 0.0489 & 0.0488 & 0.0503 \\
Contract 17 & 0.0381 & 0.0381 & 0.0382 & 0.0391 \\
Contract 18 & 0.0292 & 0.0292 & 0.0294 & 0.0309 \\
Contract 19 & 0.0192 & 0.0192 & 0.0197 & 0.0206 \\
Contract 20 & 0.0077 & 0.0077 & 0.0074 & 0.0112 \\
Mean & 0.1026 & 0.1026 & 0.1026 & 0.1047 \\
\noalign{\smallskip}\hline\noalign{\smallskip}
\end{tabular}
\end{table}

\begin{table}
\caption{Root mean square errors (RMSEs) for futures contracts in the case of UKF: comparison of 13-contract data.}
\label{tab:RMSE_UKF1}  
\centering
\begin{tabular}{p{3cm}p{2cm}p{2cm}p{2cm}p{1cm}}
\hline\noalign{\smallskip}
Contracts & Case 1 & Case 2 & Case 3 & Case 4 \\
\noalign{\smallskip}\svhline\noalign{\smallskip}
Contract 1 & 0.1190 & 0.1219 & 0.1173 & 0.1173 \\
Contract 2 & 0.1131 & 0.1155 & 0.1132 & 0.1132 \\
Contract 3 & 0.1077 & 0.1097 & 0.1082 & 0.1082 \\
Contract 4 & 0.0938 & 0.0950 & 0.0930 & 0.0930 \\
Contract 5 & 0.0887 & 0.0898 & 0.0889 & 0.0889 \\
Contract 6 & 0.0773 & 0.0778 & 0.0771 & 0.0771 \\
Contract 7 & 0.0703 & 0.0711 & 0.0704 & 0.0705 \\
Contract 8 & 0.0619 & 0.0628 & 0.0621 & 0.0621 \\
Contract 9 & 0.0500 & 0.0506 & 0.0500 & 0.0500 \\
Contract 10 & 0.0400 & 0.0406 & 0.0399 & 0.0400 \\
Contract 11 & 0.0300 & 0.0311 & 0.0299 & 0.0300 \\
Contract 12 & 0.0200 & 0.0211 & 0.0201 & 0.0202 \\
Contract 13 & 0.0089 & 0.0111 & 0.0087 & 0.0087 \\
Mean & 0.0677 & 0.0691 & 0.0676 & 0.0676 \\
\noalign{\smallskip}\hline\noalign{\smallskip}
\end{tabular}
\end{table}

\begin{table}
\caption{Root mean square errors (RMSEs) for futures contracts in the case of UKF: comparison of 20-contract data.}
\label{tab:RMSE_UKF2}  
\centering
\begin{tabular}{p{3cm}p{2cm}p{2cm}p{2cm}p{1cm}}
\hline\noalign{\smallskip}
Contracts & Case 1 & Case 2 & Case 3 & Case 4 \\
\noalign{\smallskip}\svhline\noalign{\smallskip}
Contract 1 & 0.2375 & 0.2133 & 0.1868 & 0.1886 \\
Contract 2 & 0.2217 & 0.2085 & 0.1809 & 0.1820 \\
Contract 3 & 0.2252 & 0.1952 & 0.1820 & 0.1839 \\
Contract 4 & 0.1979 & 0.1804 & 0.1595 & 0.1607 \\
Contract 5 & 0.2024 & 0.1687 & 0.1614 & 0.1633 \\
Contract 6 & 0.1781 & 0.1631 & 0.1456 & 0.1466 \\
Contract 7 & 0.1734 & 0.1531 & 0.1414 & 0.1425 \\
Contract 8 & 0.1706 & 0.1426 & 0.1369 & 0.1384 \\
Contract 9 & 0.1539 & 0.1306 & 0.1225 & 0.1237 \\
Contract 10 & 0.1448 & 0.1179 & 0.1126 & 0.1139 \\
Contract 11 & 0.1319 & 0.1081 & 0.1014 & 0.1026 \\
Contract 12 & 0.1281 & 0.0975 & 0.0963 & 0.0977 \\
Contract 13 & 0.1138 & 0.0899 & 0.0853 & 0.0864 \\
Contract 14 & 0.1030 & 0.0788 & 0.0739 & 0.0750 \\
Contract 15 & 0.0911 & 0.0657 & 0.0609 & 0.0621 \\
Contract 16 & 0.0827 & 0.0584 & 0.0532 & 0.0544 \\
Contract 17 & 0.0774 & 0.0461 & 0.0454 & 0.0467 \\
Contract 18 & 0.0704 & 0.0373 & 0.0378 & 0.0392 \\
Contract 19 & 0.0624 & 0.0303 & 0.0294 & 0.0308 \\
Contract 20 & 0.0574 & 0.0223 & 0.0228 & 0.0244 \\
Mean & 0.1412 & 0.1154 & 0.1068 & 0.1081 \\
\noalign{\smallskip}\hline\noalign{\smallskip}
\end{tabular}
\end{table}

~

The RMSE for each contract is presented in Table~\ref{tab:RMSE_EKF1} and Table~\ref{tab:RMSE_UKF1} for the case of EKF and UKF algorithms respectively, based on 13-contract data. Additionally, Table~\ref{tab:RMSE_EKF2} and Table~\ref{tab:RMSE_UKF2} display the RMSE for the 20-contract data. It is easy to see that the RMSE values are similar across all four cases, regardless of the contract considered. This indicates that all futures contracts are estimated accurately. Furthermore, by assigning a larger measurement error to the short-term contract and a smaller measurement error to the long-term contract, the RMSE values decrease as the maturity time increases. Moreover, there is minimal difference in contract estimation when using either the EKF or UKF. The performance of both algorithms in terms of contract estimation is comparable, demonstrating their effectiveness in the context of this study.

\begin{table}
\caption{Estimated state parameters for the 13-contract data for Case 3 and Case 4. For Case 1 and Case 2 the state parameters are fixed to the true values.}
\label{tab:est_para1}  
\centering
\begin{tabular}{p{1.5cm}p{1.5cm}p{2cm}p{2cm}p{2cm}p{2cm}}
\hline\noalign{\smallskip}
~ & True & Case 3 - EKF & Case 4 - EKF & Case 3 - UKF & Case 4 - UKF \\
\noalign{\smallskip}\svhline\noalign{\smallskip}
$\kappa$         &  0.5 & 0.5166 & 0.9540 & 0.5169 & 0.5198 \\
$\gamma$         &  0.3 & 0.3151 & 0.4311 & 0.3153 & 0.3200 \\
$\mu_{\xi}$      &  1   & 1.8929 & -0.1651 & 1.8745 & 2.3371 \\
$\sigma_{\chi}$  &  1.5 & 1.4166 & 7.0949 & 1.4172 & 5.1781 \\
$\sigma_{\xi}$   &  1.3 & 1.4068 & 5.6780 & 1.4054 & 1.3884 \\
$\rho$           & -0.3 & -0.3025 & -0.2937 & -0.3019 & -0.2896 \\
$\lambda_{\chi}$ &  0.5 & 0.8498 & 7.8781 & 0.8537 & 2.0501 \\
$\lambda_{\xi}$  &  0.3 & 0.8423 & 4.5633 & 0.8195 & 1.2683 \\
$\chi_0$         &    0 & -1.6138 & 3.5164 & 0.1731 & -4.2461 \\
$\xi_0$          & 3.33 & 5.0634 & -9.3820 & 5.0790 & 3.6159 \\
\noalign{\smallskip}\hline\noalign{\smallskip}
\end{tabular}
\end{table}

\begin{table}
\caption{Estimated state parameters for the 20-contract data for Case 3 and Case 4. For Case 1 and Case 2 the state parameters are fixed to the true values. }
\label{tab:est_para2}  
\centering
\begin{tabular}{p{1.5cm}p{1.5cm}p{2cm}p{2cm}p{2cm}p{2cm}}
\hline\noalign{\smallskip}
~ & True & Case 3 - EKF & Case 4 - EKF & Case 3 - UKF & Case 4 - UKF \\
\noalign{\smallskip}\svhline\noalign{\smallskip}
$\kappa$         &  0.5 & 0.5115 & 0.4629 & 0.5122 & 0.5121 \\
$\gamma$         &  0.3 & 0.3083 & 0.4101 & 0.3092 & 0.3072 \\
$\mu_{\xi}$      &  1   & 1.7924 & -0.8217 & 1.8131 & 2.5670 \\
$\sigma_{\chi}$  &  1.5 & 1.4506 & 1.3957 & 1.4482 & 1.6945 \\
$\sigma_{\xi}$   &  1.3 & 1.3273 & 1.3459 & 1.3307 & 1.1172 \\
$\rho$           & -0.3 & -0.3046 & 0.3317 & -0.3037 & -0.2987 \\
$\lambda_{\chi}$ &  0.5 & 0.7988 & -0.7015 & 0.8082 & 1.0558 \\
$\lambda_{\xi}$  &  0.3 & 0.7986 & -1.8602 & 0.8081 & 0.8872 \\
$\chi_0$         &    0 & 0.7635 & 2.0328 & 0.9607 & -2.9934 \\
$\xi_0$          & 3.33 & 3.0004 & 2.0159 & 3.9452 & 6.5649 \\
\noalign{\smallskip}\hline\noalign{\smallskip}
\end{tabular}
\end{table}

\begin{table}
\caption{Estimated measurement errors for the 13-contract data for Case 3 and Case 4. For Case 1 and Case 2 the measurement errors are fixed to the true values. }
\label{tab:est_se1}  
\centering
\begin{tabular}{p{1.5cm}p{1.5cm}p{2cm}p{2cm}p{2cm}p{2cm}}
\hline\noalign{\smallskip}
~ & True & Case 3 - EKF & Case 4 - EKF & Case 3 - UKF & Case 4 - UKF \\
\noalign{\smallskip}\svhline\noalign{\smallskip}
$\sigma_1$    & 0.13 & 0.1255 & 0.1302 & 0.1255 & 0.1255 \\
$\sigma_2$    & 0.12 & 0.1207 & 0.1220 & 0.1207 & 0.1207 \\
$\sigma_3$    & 0.11 & 0.1133 & 0.1135 & 0.1133 & 0.1133 \\
$\sigma_4$    & 0.10 & 0.0966 & 0.0952 & 0.0967 & 0.0966 \\
$\sigma_5$    & 0.09 & 0.0925 & 0.0904 & 0.0925 & 0.0925 \\
$\sigma_6$    & 0.08 & 0.0790 & 0.0783 & 0.0790 & 0.0790 \\
$\sigma_7$    & 0.07 & 0.0721 & 0.0719 & 0.0721 & 0.0721 \\
$\sigma_8$    & 0.06 & 0.0630 & 0.0635 & 0.0630 & 0.0630 \\
$\sigma_9$    & 0.05 & 0.0507 & 0.0515 & 0.0507 & 0.0507 \\
$\sigma_{10}$ & 0.04 & 0.0404 & 0.0411 & 0.0404 & 0.0404 \\
$\sigma_{11}$ & 0.03 & 0.0298 & 0.0313 & 0.0298 & 0.0298 \\
$\sigma_{12}$ & 0.02 & 0.0207 & 0.0227 & 0.0207 & 0.0207 \\
$\sigma_{13}$ & 0.01 & 0.0100 & 1.03E-05 & 0.0100 & 0.0100 \\
\noalign{\smallskip}\hline\noalign{\smallskip}
\end{tabular}
\end{table}

\begin{table}
\caption{Estimated measurement errors for the 20-contract data for Case 3 and Case 4. For Case 1 and Case 2 the measurement errors are fixed to the true values. }
\label{tab:est_se2}  
\centering
\begin{tabular}{p{1.5cm}p{1.5cm}p{2cm}p{2cm}p{2cm}p{2cm}}
\hline\noalign{\smallskip}
~ & True & Case 3 - EKF & Case 4 - EKF & Case 3 - UKF & Case 4 - UKF \\
\noalign{\smallskip}\svhline\noalign{\smallskip}
$\sigma_1$    & 0.20 & 0.1953 & 0.2015 & 0.1953 & 0.1953 \\
$\sigma_2$    & 0.19 & 0.1902 & 0.1962 & 0.1903 & 0.1903 \\
$\sigma_3$    & 0.18 & 0.1869 & 0.1888 & 0.1869 & 0.1869 \\
$\sigma_4$    & 0.17 & 0.1646 & 0.1647 & 0.1646 & 0.1646 \\
$\sigma_5$    & 0.16 & 0.1636 & 0.1636 & 0.1636 & 0.1636 \\
$\sigma_6$    & 0.15 & 0.1487 & 0.1486 & 0.1486 & 0.1486 \\
$\sigma_7$    & 0.14 & 0.1433 & 0.1434 & 0.1433 & 0.1433 \\
$\sigma_8$    & 0.13 & 0.1366 & 0.1365 & 0.1366 & 0.1366 \\
$\sigma_9$    & 0.12 & 0.1225 & 0.1233 & 0.1225 & 0.1225 \\
$\sigma_{10}$ & 0.11 & 0.1109 & 0.1112 & 0.1109 & 0.1109 \\
$\sigma_{11}$ & 0.10 & 0.0995 & 0.1002 & 0.0995 & 0.0995 \\
$\sigma_{12}$ & 0.09 & 0.0924 & 0.0926 & 0.0925 & 0.0925 \\
$\sigma_{13}$ & 0.08 & 0.0826 & 0.0834 & 0.0826 & 0.0826 \\
$\sigma_{14}$ & 0.07 & 0.0706 & 0.0708 & 0.0706 & 0.0706 \\
$\sigma_{15}$ & 0.06 & 0.0570 & 0.0574 & 0.0570 & 0.0570 \\
$\sigma_{16}$ & 0.05 & 0.0493 & 0.0498 & 0.0493 & 0.0493 \\
$\sigma_{17}$ & 0.04 & 0.0387 & 0.0389 & 0.0388 & 0.0387 \\
$\sigma_{18}$ & 0.03 & 0.0294 & 0.0294 & 0.0294 & 0.0294 \\
$\sigma_{19}$ & 0.02 & 0.0208 & 0.0204 & 0.0208 & 0.0208 \\
$\sigma_{20}$ & 0.01 & 0.0093 & 0.0113 & 0.0093 & 0.0093 \\
\noalign{\smallskip}\hline\noalign{\smallskip}
\end{tabular}
\end{table}

%\begin{table}
%\caption{Estimated coordinate representations (EKF). }
%\label{tab:est_coe_EKF}  
%\centering
%\begin{tabular}{p{1.5cm}p{2cm}p{2cm}p{2cm}p{2cm}p{1cm}}
%\hline\noalign{\smallskip}
%~ & True & Case 1 & Case 2 & Case 3 & Case 4 \\
%\noalign{\smallskip}\svhline\noalign{\smallskip}
%$\alpha_1$ & 5 & NA & 4.5149 & NA & -3.8661 \\
%$\alpha_2$ & 2 & NA & 1.8340 & NA & -2.2637 \\
%$\alpha_3$ & 2 & NA & 2.3138 & NA & -9.9986 \\
%$\alpha_4$ & 2 & NA & 1.9874 & NA & 0.0163 \\
%$\alpha_5$ & 3 & NA & 3.0488 & NA & 0.0470 \\
%$\alpha_6$ & 1 & NA & 0.9661 & NA & -0.7228 \\
%\noalign{\smallskip}\hline\noalign{\smallskip}
%\end{tabular}
%\end{table}

%\begin{table}
%\caption{Estimated coordinate representations (UKF). }
%\label{tab:est_coe_UKF}  
%\centering
%\begin{tabular}{p{1.5cm}p{2cm}p{2cm}p{2cm}p{2cm}p{1cm}}
%\hline\noalign{\smallskip}
%~ & True & Case 1 & Case 2 & Case 3 & Case 4 \\
%\noalign{\smallskip}\svhline\noalign{\smallskip}
%$\alpha_1$ & 5 & NA & 10 & NA & 3.8891 \\
%$\alpha_2$ & 2 & NA & 8.5185 & NA & -0.2656 \\
%$\alpha_3$ & 2 & NA & -2.2797 & NA & 0.3508 \\
%$\alpha_4$ & 2 & NA & 1.8713 & NA & 0.1451 \\
%$\alpha_5$ & 3 & NA & -3.6190 & NA & 0.8563 \\
%$\alpha_6$ & 1 & NA & 0.5497 & NA & 1.0349 \\
%\noalign{\smallskip}\hline\noalign{\smallskip}
%\end{tabular}
%\end{table}
\begin{table}
\caption{Estimated coordinate representations for the 13-contract data for Case 2 and Case 4. For Case 1 and Case 3 the coordinate representations are fixed to the true values.}
\label{tab:est_coe1}  
\centering
\begin{tabular}{p{1.5cm}p{1.5cm}p{2cm}p{2cm}p{2cm}p{2cm}}
\hline\noalign{\smallskip}
~ & True & Case 2 - EKF & Case 4 - EKF & Case 2 - UKF & Case 4 - UKF \\
\noalign{\smallskip}\svhline\noalign{\smallskip}
$\alpha_1$ & 5 & 4.5149 & -3.8661 & 10 & 3.8891 \\
$\alpha_2$ & 2 & 1.8340 & -2.2637 & 8.5185 & -0.2656 \\
$\alpha_3$ & 2 & 2.3138 & -9.9986 & -2.2797 & 0.3508 \\
$\alpha_4$ & 2 & 1.9874 & 0.0163 & 1.8713 & 0.1451 \\
$\alpha_5$ & 3 & 3.0488 & 0.0470 & -3.6190 & 0.8563 \\
$\alpha_6$ & 1 & 0.9661 & -0.7228 & 0.5497 & 1.0349 \\
\noalign{\smallskip}\hline\noalign{\smallskip}
\end{tabular}
\end{table}

\begin{table}
\caption{Estimated coordinate representations for the 20-contract data for Case 2 and Case 4. For Case 1 and Case 3 the coordinate representations are fixed to the true values. }
\label{tab:est_coe2}  
\centering
\begin{tabular}{p{1.5cm}p{1.5cm}p{2cm}p{2cm}p{2cm}p{2cm}}
\hline\noalign{\smallskip}
~ & True & Case 2 - EKF & Case 4 - EKF & Case 2 - UKF & Case 4 - UKF \\
\noalign{\smallskip}\svhline\noalign{\smallskip}
$\alpha_1$ & 5 & 5.3698 & 8.0156 & 6.7448 & 8.9205 \\
$\alpha_2$ & 2 & 2.5473 & -4.2205 & 8.5185 & -5.7008 \\
$\alpha_3$ & 2 & 1.9943 & 6.7163 & -2.2797 & -4.9896 \\
$\alpha_4$ & 2 & 2.0308 & 2.1835 & 1.8713 & 1.4698 \\
$\alpha_5$ & 3 & 2.8492 & 2.1582 & -3.6190 & 2.9955 \\
$\alpha_6$ & 1 & 0.9686 & -2.3905 & 0.5497 & 1.4239 \\
\noalign{\smallskip}\hline\noalign{\smallskip}
\end{tabular}
\end{table}

However, when we look at the parameter estimations, the conclusion changes. Table~\ref{tab:est_para1} - Table~\ref{tab:est_coe2} give the estimated state parameters, measurement errors and coordinate representations, respectively. For the state parameters, the estimations in Case 4 (coordinate representations are estimated) change a lot compared to the estimations in Case 3 (coordinate representations are fixed), which suggests the existence of parameter identification problems. The trends of state variables are somehow captured by the coordinate representations. Moreover, the estimations in Case 3 are closer to the true values. 

For the measurement errors (Table~\ref{tab:est_se1} and Table~\ref{tab:est_se2}), there are not many differences between Case 3 and Case 4. Measurement errors are estimated accurately. There are also not many differences between EKF and UKF. 

The estimated coordinate representations are given in Table~\ref{tab:est_coe1} and Table~\ref{tab:est_coe2}. Comparing Case 2 (state parameters and measurement errors are fixed) to Case 4 (state parameters and measurement errors are estimated), like state parameters, the estimations of coordinate representations change a lot. Moreover, in Case 2, the estimations of coordinate representations filtered by EKF are close to the true values, but not the estimations filtered by UKF are not. In that case, coordinate representations can be estimated through EKF. In Case 4, both EKF and UKF cannot estimate coordinate representations.  

\section{Conclusion}
\label{sec:conclusion}

In the modelling of commodity futures, it is common to assume that the logarithm of the underlying spot price is expressed  as a sum of various factors. However, this class of models is subject to two limitations. Firstly, for deriving a closed-form expression for the futures price, the spot price used to be a linear function of Gaussian distributed factors. Secondly, these models imply that the spot price always be positive. To overcome these two limitations, we introduced the polynomial diffusion model in this paper, which serves as a generalisation of the Schwartz-Smith two-factor model. This model allows for a more flexible and non-linear representation of the spot price. Specifically, we applied a polynomial diffusion model of degree 2 to the two-factor model. The estimation of parameters and hidden state variables was performed using the EKF and the UKF. 

We conducted a study to assess the performance of the models in four different cases using the simulated data. Overall, we found that the futures contracts can be accurately estimated. However, parameter estimation remains challenging, even when we impose constrains on the model parameters and estimate coordinate representations of the polynomial of spot price. In other words, while the state variables $\chi_t$ and $\xi_t$ can not be estimated correctly, however the futures prices were recovered reasonably well. This phenomenon suggests the existence of identification problems. Firstly, the trends of state variables are captured by the spot price polynomials. As a consequence, the estimates of both state parameters and coordinate representations of the polynomial are far away from the true values. Secondly, even when we fix one set of parameters (either the model parameters or coordinate representations), the estimation of the other set of parameters still fails to converge to the true values. The identification issue poses the challenge of finding the correct order for the polynomial diffusion process. Further research on the parameter identification problem is needed to address this issue.

% BibTeX users please use
\bibliographystyle{spmpsci}
\bibliography{References}

\end{document}